\documentclass[10pt,conference]{IEEEtran}
\IEEEoverridecommandlockouts
\usepackage[dvipdf]{graphicx,color}

\usepackage{amssymb}
\usepackage{amsmath}
\usepackage{stfloats}
\usepackage{amsfonts}
\usepackage{balance}
\usepackage{color}
\usepackage{algorithm}
\usepackage{algpseudocode}
\usepackage{color}
\usepackage{varwidth}
\usepackage{multicol}
\usepackage{subfigure}
\usepackage{xspace}
\usepackage{enumerate}
\usepackage{xcolor,cite,etoolbox}
\usepackage{bm}
\usepackage{amsthm}
\usepackage{hyperref}
 \usepackage{caption}
\usepackage{subcaption}
\newtheorem{proposition}{Proposition}
\newtheorem{corollary}{Corollary}
\newtheorem{remark}{Remark}
\usepackage[font=footnotesize,labelfont=footnotesize]{caption}
\usepackage{ragged2e}

\def\BibTeX{{\rm B\kern-.05em{\sc i\kern-.025em b}\kern-.08em
    T\kern-.1667em\lower.7ex\hbox{E}\kern-.125emX}}

\begin{document}

\title{Outage Analysis of RSMA-Enabled Terrestrial Users Under
Co-Channel Interference from HAPS}

\author{\IEEEauthorblockN{Farjam Karim\IEEEauthorrefmark{1},   Nurul Huda Mahmood\IEEEauthorrefmark{1}, Prathapasinghe Dharmawansa\IEEEauthorrefmark{1},  Deepak Kumar\IEEEauthorrefmark{2},  and Matti Latva-aho\IEEEauthorrefmark{1}%
	\thanks{This research was supported by Interreg Aurora Project. }}\\
\IEEEauthorblockA{\IEEEauthorrefmark{1}Centre for Wireless Communications, University of Oulu, Finland. \\ 
\IEEEauthorblockA{\IEEEauthorrefmark{2}Department of Electronics and Communication Engineering, MNNIT, Allahabad, Prayagraj 211004, India. \\
  }
		Email: \{farjam.karim,\;Prathapasinghe.KaluwaDevage,\;nurulhuda.mahmood,\;matti.latva-aho\}@oulu.fi,\\
        dkumar@mnnit.ac.in 
		}} 

\maketitle

\begin{abstract}
The coexistence of terrestrial networks (TNs) with high-altitude platform station (HAPS)-based non-terrestrial networks (NTNs) is a promising approach for extending 6G connectivity, but the resulting cross-network interference can significantly affect TN reliability. This paper investigates an rate-splitting multiple access (RSMA)-based TN operating under interference from a multibeam HAPS-NTN system. The desired TN links are modeled using Nakagami-$m$ fading, while the aggregate HAPS interference is characterized by shadowed-Rician fading. Closed-form outage probability expressions are derived by characterizing the aggregate interference and accounting for both perfect and imperfect successive interference cancellation (SIC). Monte Carlo simulations are used to verify the analytical results and to examine the effects of HAPS transmit power, the number of interfering beams, and SIC imperfections. The results demonstrate that the interference-to-signal power scaling is a key factor determining TN outage behavior, with jointly scaled HAPS interference leading to interference-limited performance. Moreover, RSMA provides improved outage performance compared with non-orthogonal multiple access (NOMA), particularly in the presence of HAPS interference. 
\end{abstract}

\begin{IEEEkeywords}
HAPS, RSMA, TN-NTN coexistence, outage probability, shadowed Rician fading.
\end{IEEEkeywords}

\section{Introduction}
The sixth generation (6G) of wireless networks is expected to evolve into a unified space-air-ground integrated network, in which satellites, high-altitude platform stations (HAPS), unmanned aerial vehicles, and terrestrial base stations (BSs) jointly deliver seamless connectivity~\cite{yue_xiao_jsac_2024}. This direction is reflected in the ITU-R IMT-2030 framework and ongoing 3GPP non-terrestrial network (NTN) standardization efforts, both of which treat non-terrestrial platforms as native components of future radio access networks~\cite{itu_m2160, 3gpp_38811}. HAPS, operating at an altitude of roughly 20 kilometers (km), occupies a particularly attractive position within this architecture: it offers markedly lower propagation delay and path loss than low-Earth-orbit satellites. Its multi-beam antenna also enables a single platform to cover an area far larger than any individual terrestrial BS, making it an attractive means of extending broadband access to rural, maritime, and disaster-affected regions~\cite{abbasi_wire_comm_2024, khan_std_mag_2026, karaman_iot_mag_2026}.

To meet growing spectral-efficiency demands, HAPS is expected to reuse licensed terrestrial spectrum~\cite{haps_alliance_ntn_2024}. Such terrestrial network (TN)-NTN spectrum sharing improves resource utilization but exposes terrestrial users to co-channel interference from the HAPS. This effect is compounded by the fact that a HAPS payload typically activates multiple beams simultaneously to serve its own user population, so a terrestrial user within overlapping footprints experiences interference from several HAPS beams rather than one. 

Rate-splitting multiple access (RSMA) has recently emerged as a flexible multiple access strategy capable of efficiently managing interference. In RSMA, each user's message is divided into a common part and a private part. The common parts of all users are jointly encoded into a single common stream decoded by every user, whereas the private parts are independently encoded into private streams intended for their respective users~\cite{Bruno_proc, Farjam_twc_2025}. After decoding and removing the common stream through successive interference cancellation (SIC), each user decodes its own private stream while treating the remaining interference as noise. By enabling interference to be partially decoded and partially treated as noise, RSMA is well suited to terrestrial users sharing spectrum with a HAPS, where interference originates from independently scheduled multi-beam transmissions and cannot be coordinated. Although downlink RSMA (see, e.g., Section II-E of~\cite{Bruno_proc} and the references therein) and HAPS-enabled communications~\cite{abbasi_wire_comm_2024, khan_std_mag_2026, karaman_iot_mag_2026} have each been investigated upto certain extent, their integration under realistic multi-beam interference remains largely unexplored. As a result, the outage performance of terrestrial RSMA users under realistic multibeam HAPS interference has not, to the best of our knowledge, been analytically characterized.

This gap motivates our work. As HAPS systems scale up their beam count to
serve larger populations, the aggregate interference experienced by an
affected terrestrial user grows in ways that single-beam analyses cannot
capture, and RSMA's ability to absorb that growth remains unquantified. We
address this gap by developing an analytical framework for an RSMA-enabled
TN-NTN network in which a terrestrial BS and a multi-beam HAPS operate
concurrently over shared spectrum. The main contributions of this work are
as follows:
\begin{itemize}
    \item We formulate an RSMA-enabled TN-NTN coexistence model in which
    terrestrial users experience aggregate co-channel interference from
    multiple HAPS beams, jointly capturing Nakagami-$m$ fading on the
    desired TN link and shadowed-Rician fading on the interfering NTN link.
    \item We approximate the aggregate multi-beam interference via a
    moment-matched Gamma distribution and derive a closed-form outage
    probability expression (TN link) for the  RSMA common and private streams,
    explicitly parameterized by the number of interfering HAPS beams.
    \item We validate the analysis against Monte Carlo simulation and
    provide design insights into how RSMA power allocation and HAPS beam
    count jointly affect terrestrial outage performance.
\end{itemize}

\noindent \textbf{Notations:}
$\mathcal{CN}(0,\sigma^2_{(\cdot)})$ represents a zero-mean
circularly symmetric complex Gaussian distribution with variance $\sigma^2_{(\cdot)}$, while $\mathbb{E}[\cdot]$ and $|\cdot|$ stand for the expectation and modulus operators, respectively. The probability density function (PDF) and cumulative distribution function (CDF) of a random variable $X$ are written as $f_X(\cdot)$ and $F_X(\cdot)$, respectively, and its complementary CDF is defined as $\bar F_X(\cdot)\triangleq 1-F_X(\cdot)$. The symbols $\Gamma(\cdot)$,
$\exp(\cdot)$, and $\left(!\right)$ denote the Gamma function, the exponential function, and the factorial operator, respectively, and $\binom{\cdot}{\cdot}$
represents the binomial coefficient. 

\section{System Model}

We consider an integrated terrestrial-non-terrestrial (TN-NTN) spectrum-sharing network, where a terrestrial cellular system and a high-altitude platform station (HAPS) simultaneously operate over the same frequency band as illustrated in Fig.~\ref{system_mod_fig}. The terrestrial network consists of a single-antenna base station (BS) serving a set of  $\mathcal{K}=\{1,2, k,\cdots,K\}$ terrestrial users,
using downlink RSMA. Concurrently, the HAPS employs $\mathcal{N}=\{1,2, n,\cdots,N\}$ multiple-beam and also utilizes RSMA to serve its own $\mathcal{V}=\{1,2, v,\cdots,V\}$  users in a single beam over the same radio resources.

Since the terrestrial and HAPS systems reuse identical time-frequency resources, co-channel interference may arise whenever a terrestrial user lies within the coverage of an active HAPS beam. Consequently, the terrestrial users are divided into two categories. The first category consists of users that are outside the footprint of all HAPS beams and therefore receive only the desired signal transmitted by the terrestrial BS. The second category consists of users that fall within the coverage of one or more HAPS beams and consequently receive both the desired terrestrial transmission and co-channel interference from the HAPS. The primary objective of this work is to analyze the outage performance of the latter class of users.

Owing to the distinct propagation environments, the desired terrestrial and interfering HAPS links are modeled differently. The terrestrial BS is equipped with a single transmit antenna and communicates with $\mathcal{K}$ single antenna users over  terrestrial propagation channels. Accordingly, the desired BS-user links experience large-scale terrestrial path loss together with small-scale fading which assumed to be Nakagami-$m$ fading for this work. In contrast, the interfering HAPS links originate from multi-beam transmissions and therefore incorporate free-space propagation loss, atmospheric gaseous absorption, and other high-altitude propagation impairments in addition to shadowed-Rician fading.

The channel models for the desired terrestrial transmission and the interfering HAPS transmission are discussed in the following subsection.
\begin{figure}[t!]
    \centering
\includegraphics[width=0.95\linewidth]{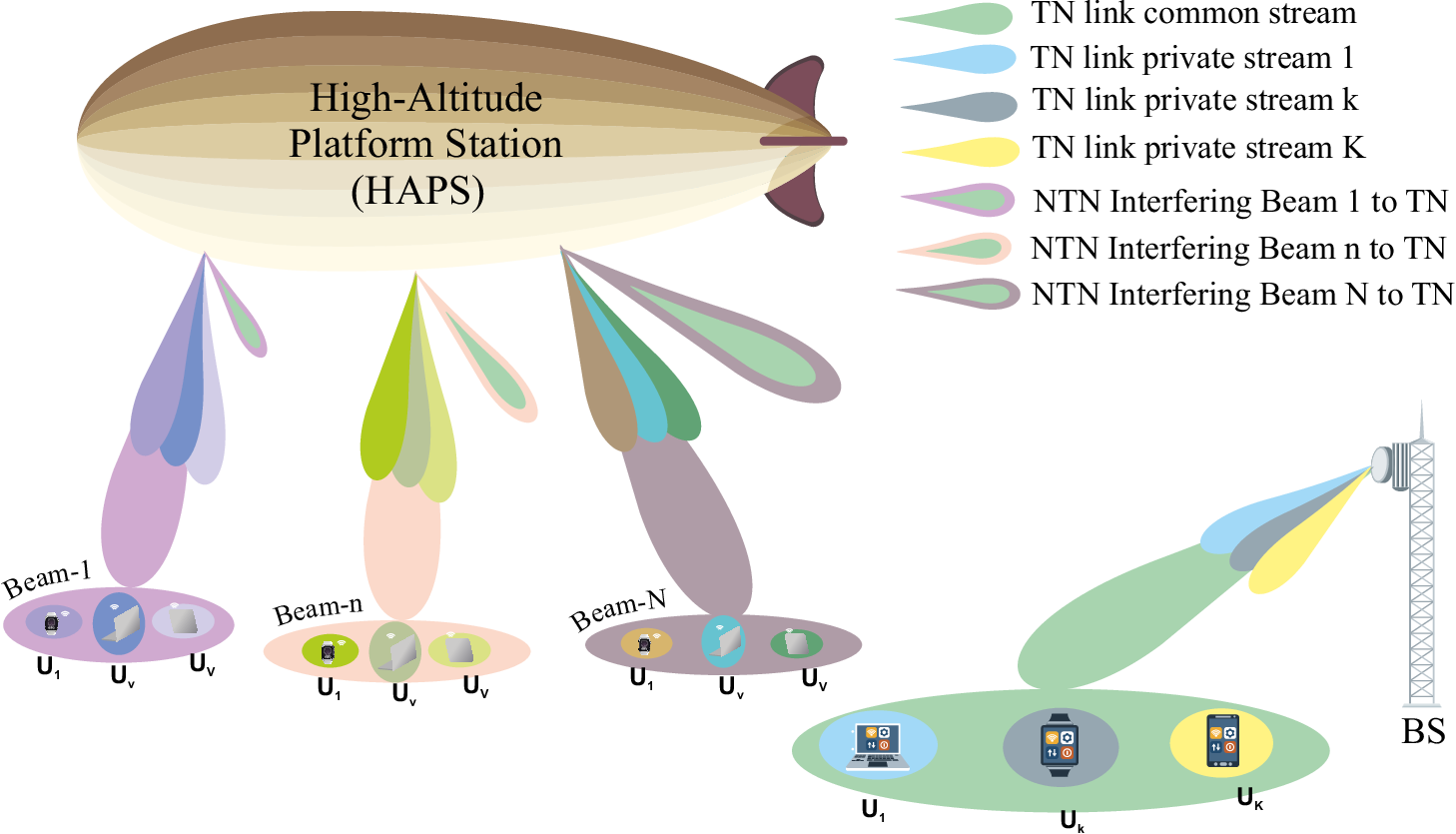}
    \caption{A schematic of the system model.}
    \label{system_mod_fig}
    \vspace{-1em}
\end{figure}
\subsection{Desired terrestrial link: BS $\rightarrow U^{(T)}_k$}

The desired terrestrial channel is modeled as a product of a deterministic large-scale channel gain and a small-scale fading coefficient. Accordingly, the complex channel coefficient between the terrestrial BS and the $k$-th user $U^{(T)}_k$ is given as
\begin{equation}\label{first_eq}
h_k=\sqrt{L_k}g_k, \;X_k\triangleq |h_k|^2,
\end{equation}
where $g_k$ denotes the small-scale Nakagami-$m$ fading coefficient with integer fading parameter $m_T$. The fading coefficient is normalized as $\mathbb{E}\left[|g_k|^2\right]=1$,
such that the complete average channel power is captured by the deterministic coefficient $L_k$, whereas $g_k$ represents only the random small-scale fluctuations around this average.
The large-scale gain of the terrestrial BS-$U^{(T)}_k$ link is expressed as
$L_k = G_{\mathrm{BS}}\, G_{U_k}^{(t)} 10^{-\mathrm{PL}_k/10}$,
where $G_{\mathrm{BS}}$ and $G_{U_k}^{(t)}$ denote the BS transmit antenna
gain and terrestrial-user receive antenna gain, respectively, and
$\mathrm{PL}_k$ [dB] is the free-space path loss over the BS-$U^{(T)}_k$
distance $d_k$, given by
    $\mathrm{PL}_k = 92.45 + 20\log_{10}\!\left(f_c[\mathrm{GHz}]\right)
    + 20\log_{10}\!\left(d_k[\mathrm{km}]\right)$. Since the terrestrial communication occurs over a relatively short propagation distance (compared to HAPS), atmospheric effects such as gaseous absorption, rain attenuation, and cloud attenuation are neglected.
The average channel power gain for the TN is thus, obtained as
$\Omega_k\triangleq\mathbb{E}[X_k]=\mathbb{E}[L_k|g_k|^2]=L_k\mathbb{E}[|g_k|^2]=L_k$.
 Consequently, the channel power gain follows a Gamma distribution
with its CDF given as~\cite{Farjam_twc_2025}:
\begin{align}  \label{CDF}
F_{X_k}(x)= \frac{1}{\Gamma\left(m_k\right)} \gamma\left(m_k, \frac{ m_k }{L_k}x\right), \qquad x\geq 0.
\end{align}
For $m_k=1$, the Nakagami-$m$ reduces to Rayleigh fading, and $X_k$ becomes exponentially distributed.

\subsection{Interference channel: HAPS beam $n\rightarrow U^{(T)}_k$}
\label{sec:SRfadng}
Unlike the desired terrestrial BS-$U^{(T)}_k$ link, the interference link from the HAPS $n$-{th} beam  to the terrestrial user $U^{(T)}_k$ experiences a long-distance air-to-ground propagation environment. In particular, the HAPS transmission experiences beam-dependent antenna gain, large propagation distance, and atmospheric attenuation effects, including free-space spreading loss, gaseous absorption, and rain attenuation. Furthermore, owing to the presence of a dominant line-of-sight (LoS) component together with random shadowing and multipath scattering, the small-scale fading is modeled by the shadowed-Rician distribution~\cite{yun_ai_photon_2019,Yahia_aerospace_2022}:
\begin{align}
g^{(H)}_{k, n}=\sqrt{C_{k,n}}\,\tilde g_{k,n}, \qquad Z_{k,n}\triangleq|\tilde g_{k,n}|^2,
\end{align}
The large-scale gain of this link can be expressed as
\begin{align}
C_{k,n} = \frac{G_H(\varphi_{k,n}) G_{U_k}^{(t)}}{L_F(d_{k,n})\,L_a\,L_r(d_{k,n})},\label{eq}
\end{align}
where $G_H(\varphi_{k,n})$ is the HAPS transmit antenna gain of the $n$-th beam at off-boresight angle $\varphi_{k,n}$ toward $U_k^{(T)}$; and $L_F(d_{k,n})$, $L_a$, and $L_r(d_{k,n})$ are the free-space, atmospheric, and rain attenuation losses, respectively (all in dB).
The free-space propagation loss in dB is evaluated as $L_F^{\mathrm{dB}}=92.45+20\log_{10}\!\big(f_c[\mathrm{GHz}]\big) + 20\log_{10}\!\big(d_{k,n}[\mathrm{km}]\big)$, where $f_c$ is the carrier frequency and $d_{k,n}=\mathcal{H}/\sin(\psi_{k,n})$, with $\mathcal{H}$ the HAPS altitude and $\psi_{k,n}$ denotes the elevation angle from $U_k^{(T)}$ to the $n_{\text{th}}$ HAPS beam. The corresponding linear-scale loss used in \eqref{eq} is $L_F=10^{L_F^{\mathrm{dB}}/10}$. $G_H(\varphi_{k,n})$ follows the reference pattern
$G_H^{\mathrm{dB}}(\varphi_{k,n}) = G_H^{\max} - \min\!\left(12\left(\frac{\varphi_{k,n}}{\varphi_{3\mathrm{dB}}}\right)^2,\, A_m\right)$,
where $G_H^{\max}$, $\varphi_{3\mathrm{dB}}$, and $A_m$ are the maximum boresight gain, 3-dB beamwidth, and maximum sidelobe attenuation, respectively. Assuming a nadir-pointing boresight, $\varphi_{k,n}=90^\circ-\psi_{k,n}$.
Unlike the terrestrial link, these attenuation factors cannot be neglected because of the long slant range between the HAPS and the terrestrial user. 
As $Z_{k,n}$ is assumed to follow the
shadowed-Rician fading; its PDF for integer shadowing severity $m_{k,n}$ is thus expressed as:
\begin{align}
f_{Z_{k,n}}(z)=\alpha_{k,n}\sum_{r=0}^{m_{k,n}-1}\binom{m_{k,n}-1}{r}\frac{\delta_{k,n}^{r}z^r}{r!}\nonumber\\
\times \exp\left({-(\beta_{k,n}-\delta_{k,n})z}\right),
\end{align}
Smaller values of $m_{k,n}$ correspond to severe shadowing, whereas larger
values indicate milder shadowing, with the model approaching the conventional
(unshadowed) Rician fading distribution as $m_{k,n}\to\infty$. The parameters $\alpha_{k,n}$, $\beta_{k,n}$, and $\delta_{k,n}$ depend on the average LoS power $\Omega_{k,n}$ and the average scattered-component power $2b_{k,n}$, and are expressed as
$\alpha_{k,n}=\frac{1}{2b_{k,n}}\left(
\frac{2b_{k,n} m_{k,n}}{2b_{k,n} m_{k,n}+\Omega_{k,n}}
\right)^{m_{k,n}},
\beta_{k,n}=\frac{1}{2b_{k,n}},
\delta_{k,n}=\frac{\Omega_{k,n}}
{2b_{k,n}(2b_{k,n} m_{k,n}+\Omega_{k,n})}$.
 
 The mean power of the shadowed-Rician fading component itself is the sum of the average LoS power and the average multipath power, which can be written as
$\mathbb{E}[Z_{k,n}] = \Omega_{k,n}+2b_{k,n}$.
Therefore, $C_{k,n}$ represents only the deterministic propagation and antenna-gain contribution, whereas $\Omega_{k,n}$, $2b_{k,n}$, and $m_{k,n}$ characterize the physical shadowing and multipath statistics of the HAPS channel to $U^{(T)}_k$.
Taking expectation, the average
received interference power from HAPS $n$-th beam can be given as
\begin{align}
\mathbb{E}[Y^{(H)}_{k,n}] \!=\! C_{k,n}P_{H,n}\mathbb{E}[Z_{k,n}]\! = \!C_{k,n}P_{H,n}\!\left(\Omega_{k,n}\!+\!2b_{k,n}\right),
\label{eq:Ym_mean}
\end{align}
where $P_{H,n}$ is transmit power budget avaliable for the $n$-{th} beam.
Having established the propagation models for the desired terrestrial and interfering HAPS links, we next present the transmitted signal models of the TN and HAPS systems, followed by the received signal at the terrestrial user and the corresponding SINR expressions.
\subsection{Received signal and SINR}
At the TN transmitter side each user's message is split by the rate-splitting encoder into two parts: a common part and a private part~\cite{Bruno_proc, Farjam_twc_2025}. The common parts of all users are combined and encoded into a single common stream $s_{T,c}$, drawn from a shared codebook that all the users are capable of decoding, while each user's private part is encoded into an individual private stream, denoted $s_{T,k}$ for $U^{(T)}_k$. These streams are then linearly superposed in the power domain to form the transmitted signal:
$x_b = \sqrt{P_T \zeta_{T,c}}s_{T,c}+ \sum\limits^{K}_{k=1}\sqrt{P_T \zeta_{T,k}} s_{T,k}$,
where $\mathbb{E}[|s_{T,c}|^2]=\mathbb{E}[|s_{T,k}|^2]=1$, and $P_T$ denotes the power budget available for transmitting the signal $x_b$, whereas $\zeta_{T,c}$ and $\zeta_{T,k}$ denotes the power allocation for common stream and the $U^{(T)}_k$ private stream, respectively. Note that $\zeta_{T,c} + \sum\limits^{K}_{k=1}\zeta_{T,k}=1$.
Moreover, the HAPS $n$-{th} beam also uses the same RSMA principles to serve its own set of users and its transmitted RSMA signal can be expressed as $x_{H,n} = \sqrt{P_{H,n} \zeta^{(H)}_{n,c}}s^{(H)}_{n,c} + \sum\limits^{V}_{v=1}\sqrt{P_{H,n} \zeta^{(H)}_{v,n}} s^{(H)}_{v,n}$. The notation for the HAPS transmitted signal follows the same RSMA
transmission principle as that of the TN. Although the HAPS also employs
RSMA, $U^{(T)}_k$ does not possess the HAPS codebooks and therefore cannot
decode or perform SIC on the HAPS streams. Consequently, the entire received
HAPS signal acts as co-channel interference.

Let $\mathcal{N}_k\subseteq\mathcal{N}$ denote the set of HAPS beams covering
the terrestrial user $U^{(T)}_k$. The instantaneous interference power
received from the $n$-th HAPS beam is
\begin{equation}
Y^{(H)}_{k,n} \triangleq C_{k,n}P_{H,n}Z_{k,n}.
\end{equation}
Hence, the aggregate HAPS interference at $U^{(T)}_k$ is
$I^{(H)}_{k}=\sum_{n\in\mathcal{N}_k}Y^{(H)}_{k,n}$.
The received signal at the terrestrial user $U^{(T)}_k$ is given by
\begin{align}
y^{(T)}_k=h_k\left(\sqrt{P_T \zeta_{T,c}}s_{T,c}
+ \sum\limits^{K}_{k=1}\sqrt{P_T \zeta_{T,k}} s_{T,k}\right)\nonumber\\+\sum_{n\in\mathcal N_k}g^{(H)}_{k,n}x_{H,n}+w_k,
\label{eq:received_signal}
\end{align}
where $w_k\sim\mathcal{CN}(0,\sigma_{T,k}^2)$ is the additive white Gaussian noise.
According to the RSMA decoding procedure, $U^{(T)}_k$ first decodes the common
stream while treating all private streams and the aggregate HAPS interference
as noise. Therefore, the corresponding  SINR for the common stream can be given as
\begin{align}\label{comm_sinr}
\Upsilon^{(T)}_{c,k}=\frac{P_T\zeta_{T,c}X_k}{P_T(1-\zeta_{T,c})X_k+I^{(H)}_k+\sigma_{T,k}^2},
\end{align}
where $X_k$ is given in \eqref{first_eq}. After decoding the common stream, $U^{(T)}_k$ performs SIC before decoding its own private stream. To account for imperfect SIC, let $0\leq\varepsilon\leq1$ denote the residual interference factor, where $\varepsilon=0$ corresponds to perfect SIC case. Consequently, the SINR for decoding the private stream can be given as
\begin{align}
\Upsilon^{(T)}_{p,k}=\frac{P_T\zeta_{T,k}X_k}{P_T\left(\sum\limits_{i=1, i\neq k}^{K}\zeta_{T,i}+\varepsilon\zeta_{T,c}\right)X_k+I_k^{(H)}+\sigma_{T,k}^2 }.
\label{SINR_private}
\end{align}These SINR expressions serve as the basis for the performance analysis 
presented in the following section.
\section{Performance Analysis}\label{sec:analysis}
In this section, we derive the outage probability and throughput expressions of the considered system for two interference scenarios: (i) a single interfering HAPS beam affecting $U^{(T)}_k$, and (ii) multiple interfering HAPS beams simultaneously affecting $U^{(T)}_k$.  Let  $\Upsilon^{th}_{c,k}=2^{R^{(T)}_{c}}-1$ and $\Upsilon^{th}_{p,k}=2^{R^{(T)}_{p,k}}-1$ denote the target SINR thresholds for $U^{(T)}_k$ common and private stream, respectively with $R^{(T)}_{c}$ and $R^{(T)}_{p,k}$ representing the corresponding target rates. 
 An outage occurs whenever either the common stream or the private
stream cannot be successfully decoded. If \eqref{comm_sinr} and \eqref{SINR_private} fails to cross the thresholds $\Upsilon^{th}_{c,k}$ and $\Upsilon^{th}_{p,k}$, respectively, then $U^{(T)}_k$ will be in an outage.
The respective outage probability considering single interfering HAPS beam is evaluated in the following proposition.
\begin{proposition}
The outage probability of $U^{(T)}_k$ under a single interfering HAPS beam is given by
    \begin{align}\label{single_outage}
        \mathrm{P}_{\text{out},k} = &1-\sum\limits_{r=0}^{m_{k, n}-1}\sum\limits_{\ell=0}^{m_k-1}\sum\limits_{p=0}^{\ell}\binom{m_{k, n}-1}{r}\binom{\ell}{p}\left(\sigma^2_{T,k}\right)^{\ell-p}\nonumber\\
        &\times\frac{\alpha_{k, n}\delta^r_{k, n}\exp\left(-\varphi_{k}\sigma^2_{T,k}\right)\varphi_{k}^\ell\left(r+p\right)!}{r!\; \ell!\;B^{r+1}_{k,n}\left(\lambda_{k,n}+\varphi_{k}\right)^{r+p+1}},
    \end{align}
    where $\varphi_{k}= \frac{m_k\xi_k}{L_{k}}$,  $\lambda_{k,n}=\frac{\beta_{k,n}-\delta_{k,n}}{B_{k,n}}$, $B_{k,n}=C_{k,n}P_{H,n}$, $\xi_k = \max\left(\xi_{c,k}, \xi_{p,k}\right)$, $\xi_{c,k}=\frac{\Upsilon^{th}_{c,k}}{P_T\left(\zeta_{T,c}-\Upsilon^{th}_{c,k}(1-\zeta_{T,c})\right)}$, and $\xi_{p,k}=\frac{\Upsilon^{th}_{p,k}}{P_T\left(\zeta_{T,k}-\Upsilon^{th}_{p,k}\left(\sum\limits_{i=1, i\neq k}^{K}\zeta_{T,i}+\varepsilon\zeta_{T,c}\right)\right)}$, with $\varepsilon$ denoting the residual interference factor due to imperfect SIC. The case of perfect SIC is obtained by setting $\varepsilon=0$. Note that, \eqref{single_outage} is valid provided  $\zeta_{T,c}> \frac{\Upsilon^{th}_{c,k}}{1+\Upsilon^{th}_{c,k}}$ and $\zeta_{T,k}>\Upsilon^{th}_{p,k}\left(\sum\limits_{i=1, i\neq k}^{K}\zeta_{T,i}+\varepsilon\zeta_{T,c}\right)$, otherwise $ \mathrm{P}_{\text{out},k}=1$.
\end{proposition}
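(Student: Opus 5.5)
The plan is to write the non-outage event as a single threshold condition on $X_k$, conditional on the interference, and then average over the shadowed-Rician interference. With a single beam, $I^{(H)}_k=Y^{(H)}_{k,n}=B_{k,n}Z_{k,n}$ where $B_{k,n}=C_{k,n}P_{H,n}$.

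\textbf{Step 1: reduce to one threshold on $X_k$.} Start from \eqref{comm_sinr} and rearrange $\Upsilon^{(T)}_{c,k}\ge\Upsilon^{th}_{c,k}$ as
\[
X_k\,P_T\bigl(\zeta_{T,c}-\Upsilon^{th}_{c,k}(1-\zeta_{T,c})\bigr)\ge \Upsilon^{th}_{c,k}\bigl(I^{(H)}_k+\sigma^2_{T,k}\bigr).
\]
When the bracket is positive, i.e.\ $\zeta_{T,c}>\Upsilon^{th}_{c,k}/(1+\Upsilon^{th}_{c,k})$, this is equivalent to $X_k\ge\xi_{c,k}(I^{(H)}_k+\sigma^2_{T,k})$. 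When the bracket is nonpositive, the condition fails almost surely, so $\mathrm{P}_{\text{out},k}=1$. The same manipulation of \eqref{SINR_private} gives $X_k\ge\xi_{p,k}(I^{(H)}_k+\sigma^2_{T,k})$ under the stated positivity condition on $\zeta_{T,k}$. Success requires both events, and both thresholds share the common factor $(I+\sigma^2)$. Their intersection is therefore $X_k\ge\xi_k(I^{(H)}_k+\sigma^2_{T,k})$ with $\xi_k=\max(\xi_{c,k},\xi_{p,k})$. This yields
\[
\mathrm{P}_{\text{out},k}=1-\mathbb{E}_{I}\bigl[\bar F_{X_k}(\xi_k(I+\sigma^2_{T,k}))\bigr].
\]

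\textbf{Step 2: closed form of the conditional success probability.} Because $m_k$ is an integer, the Gamma CCDF from \eqref{CDF} has the finite expansion
\[
\bar F_{X_k}(x)=e^{-m_kx/L_k}\sum_{\ell=0}^{m_k-1}\frac{(m_kx/L_k)^\ell}{\ell!}.
\]
Substitute $x=\xi_k(y+\sigma^2_{T,k})$ and write $\varphi_k=m_k\xi_k/L_k$. Expanding $(y+\sigma^2_{T,k})^\ell$ binomially gives
\[
\sum_{\ell}\sum_{p=0}^{\ell}\binom{\ell}{p}\frac{\varphi_k^\ell}{\ell!}(\sigma^2_{T,k})^{\ell-p}\,y^p\,e^{-\varphi_k\sigma^2_{T,k}}e^{-\varphi_k y}.
\]

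\textbf{Step 3: average over the interference.} First obtain the PDF of $Y=B_{k,n}Z_{k,n}$ from the shadowed-Rician PDF by the change of variables $z=y/B_{k,n}$:
\[
f_Y(y)=\alpha_{k,n}\sum_r\binom{m_{k,n}-1}{r}\frac{\delta_{k,n}^r y^r}{r!\,B_{k,n}^{r+1}}e^{-\lambda_{k,n}y},\qquad \lambda_{k,n}=\frac{\beta_{k,n}-\delta_{k,n}}{B_{k,n}}.
\]
Each term of the expectation then reduces to the standard integral
\[
\int_0^\infty y^{r+p}e^{-(\lambda_{k,n}+\varphi_k)y}\,dy=\frac{(r+p)!}{(\lambda_{k,n}+\varphi_k)^{r+p+1}}.
\]
This requires $\beta_{k,n}>\delta_{k,n}$, which holds by the definitions of the parameters. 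Collecting the factors reproduces \eqref{single_outage} exactly.

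\textbf{Main obstacle.} The delicate point is Step 1. One must handle the sign conditions that produce the $\mathrm{P}_{\text{out},k}=1$ branch. One must also justify collapsing the two decoding events into a single max-threshold event. This collapse works only because both SINRs share the same $X_k$ and the same $(I+\sigma^2)$ term, so conditioning on $I$ makes both events upper sets in $X_k$. After that, the remaining calculation is a routine exchange of finite sums with one Gamma-type integral.
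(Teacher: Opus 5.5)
Your proposal is correct and matches the paper's proof step for step: the same reduction of the two decoding events to $X_k\ge\xi_k\bigl(I^{(H)}_k+\sigma^2_{T,k}\bigr)$ with the infeasibility branch giving $\mathrm{P}_{\text{out},k}=1$, the Erlang form of the Gamma CCDF, the change of variables giving the finite Gamma-mixture PDF of $B_{k,n}Z_{k,n}$, the binomial expansion of $(y+\sigma^2_{T,k})^\ell$, and the integral $\int_0^\infty y^{r+p}e^{-(\lambda_{k,n}+\varphi_k)y}\,dy$. You should state explicitly that $X_k$ and $I^{(H)}_k$ are independent, as the paper does, since you rely on this when averaging the conditional CCDF over the interference; separately, your distinction between $\xi_k$ and $\varphi_k=m_k\xi_k/L_k$ is cleaner than the paper's proof, which reuses $\varphi_k$ for the max.
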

\begin{proof}
    By definition, $U^{(T)}_k$ is in outage unless both the common and the private
stream are successfully decoded, i.e.
\begin{equation}
 \mathrm{P}_{\text{out},k} = 1-\Pr\Big(\Upsilon^{(T)}_{c,k}\geq\Upsilon^{th}_{c,k},\;\Upsilon^{(T)}_{p,k}\geq\Upsilon^{th}_{p,k}\Big).
\label{eqA:def}
\end{equation}
Using \eqref{comm_sinr} and \eqref{SINR_private}, the joint decoding event
$\{\Upsilon^{(T)}_{c,k}\ge\Upsilon^{th}_{c,k},\Upsilon^{(T)}_{p,k}\ge\Upsilon^{th}_{p,k}\}$
reduces, after collecting the $X_k$ terms on each SINR constraint, to the
single condition
\begin{equation}
X_k\ge\varphi_k\big(I_k^{(H)}+\sigma_{T,k}^2\big),
\end{equation}
with $\varphi_k=\max(\xi_{c,k},\xi_{p,k})$ as given in the proposition,
provided the feasibility conditions $\zeta_{T,c}>\Upsilon^{th}_{c,k}/(1+\Upsilon^{th}_{c,k})$
and $\zeta_{T,k}>\Upsilon^{th}_{p,k}\big(\sum_{i\ne k}\zeta_{T,i}+\varepsilon\zeta_{T,c}\big)$
hold. Since $X_k$ and $I_k^{(H)}=B_{k,n}Z_{k,n}$ are independent,
\begin{align}\label{main_int}
 \mathrm{P}_{\text{out},k} = 1-\int_0^\infty f_{Y^{(H)}_{k,n}}(y)\,\bar F_{X_k}\big(\varphi_k(y+\sigma_{T,k}^2)\big)\,dy.
\end{align}
For integer $m_k$,
$\bar F_{X_k}(x)=\exp\left({-\frac{m_kx}{L_k}}\right)\sum_{\ell=0}^{m_k-1}\frac{1}{\ell!}\left(\frac{m_kx}{L_k}\right)^\ell$.
(Erlang form of \eqref{CDF}), and a change of variables $z=y/B_{k,n}$ in the
shadowed-Rician PDF of $Z_{k,n}$ gives $I_k^{(H)}$'s PDF directly as the finite Gamma mixture
\begin{align}
    f_{Y^{(H)}_{k,n}}(y)\!=\!\!\!\!\!\!\sum_{r=0}^{m_{k,n}-1}\!\!\!\alpha_{k,n}\binom{m_{k,n}\!-\!1}{r}\frac{\delta^r_{k,n}y^r}{r!B_{k,n}^{r+1}}\exp\left({-\lambda_{k,n}y}\right)
\end{align}Substituting both, expanding $(y+\sigma_{T,k}^2)^\ell$ via the binomial theorem, and evaluating the resulting integral with
$\int_0^\infty y^{r+p}\exp({-\eta y})\; dy=(r+p)!/\eta^{r+p+1}$, where
$\eta=(\lambda_{k,n}+\varphi_k)$, yields \eqref{single_outage}. 
\end{proof}
\begin{corollary}
When no HAPS beam interferes with $U_k^{(T)}$, $I_k^{(H)}\equiv 0$, so the decoding condition $X_k\ge\varphi_k\big(I_k^{(H)}+\sigma^2_{T,k}\big)$ reduces directly to $X_k\ge\varphi_k\sigma^2_{T,k}$. Evaluating \eqref{eqA:def} with the Erlang complementary CDF of $X_k$ then yields the interference-free outage probability
$\mathrm{P}_{\text{out},k}^{(0)} = 1-\exp\!\left(-\frac{m_k\varphi_k\sigma^2_{T,k}}{L_k}\right)\sum_{\ell=0}^{m_k-1}\frac{1}{\ell!}\left(\frac{m_k\varphi_k\sigma^2_{T,k}}{L_k}\right)^\ell$.
\end{corollary}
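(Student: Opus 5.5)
The corollary follows from the decoding-condition reduction in the proof of the Proposition, specialized to $I_k^{(H)}\equiv 0$.

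First, with no HAPS beam covering $U_k^{(T)}$ we have $\mathcal{N}_k=\emptyset$, so $I_k^{(H)}=0$ almost surely in \eqref{comm_sinr} and \eqref{SINR_private}. Under the feasibility conditions $\zeta_{T,c}>\Upsilon^{th}_{c,k}/(1+\Upsilon^{th}_{c,k})$ and $\zeta_{T,k}>\Upsilon^{th}_{p,k}\big(\sum_{i\neq k}\zeta_{T,i}+\varepsilon\zeta_{T,c}\big)$, the positive denominators in $\xi_{c,k}$ and $\xi_{p,k}$ let me rearrange each SINR constraint into a lower bound on $X_k$:
\begin{align*}
\Upsilon^{(T)}_{c,k}\ge\Upsilon^{th}_{c,k} \iff X_k\ge \xi_{c,k}\sigma^2_{T,k}, \qquad
\Upsilon^{(T)}_{p,k}\ge\Upsilon^{th}_{p,k} \iff X_k\ge \xi_{p,k}\sigma^2_{T,k}.
\end{align*}
The intersection of the two events is therefore $\{X_k\ge \xi_k\sigma^2_{T,k}\}$, where $\xi_k=\max(\xi_{c,k},\xi_{p,k})$. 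If either feasibility condition fails, the corresponding SINR is bounded above by its threshold for every $X_k$, so the outage probability is $1$, as in the Proposition.

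Next, I substitute this event into \eqref{eqA:def}. Since the event involves only $X_k$, no integration over an interference PDF is needed, and
\[
\mathrm{P}^{(0)}_{\text{out},k}=1-\bar F_{X_k}(\xi_k\sigma^2_{T,k}).
\]
For integer $m_k$, I use the Erlang form of \eqref{CDF}, namely $\bar F_{X_k}(x)=\exp(-m_kx/L_k)\sum_{\ell=0}^{m_k-1}\frac{1}{\ell!}(m_kx/L_k)^\ell$. This follows from the standard identity $\Gamma(m,t)/\Gamma(m)=e^{-t}\sum_{\ell<m}t^\ell/\ell!$, which can be proved by repeated integration by parts. Evaluating it at $x=\xi_k\sigma^2_{T,k}$ gives the stated form.

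As a consistency check, I would also recover the result from \eqref{single_outage} by letting $B_{k,n}\to 0$, i.e., $P_{H,n}\to0$. In that limit the interference PDF concentrates at zero, and only the $p=\ell$ terms survive, reproducing the same expression. This limit is fiddly because of the $B^{-(r+1)}_{k,n}$ and $\lambda_{k,n}$ factors, so I would keep it as a sanity check and rely on the direct argument above.

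The one point needing care is notation. The corollary writes $\varphi_k$ for what the Proposition calls $\xi_k$, while the Proposition defines $\varphi_k=m_k\xi_k/L_k$. The argument of the exponential must be $m_k\xi_k\sigma^2_{T,k}/L_k$, so the threshold in the decoding condition has to be read as $\xi_k$. I would state this identification explicitly; otherwise the factor $m_k/L_k$ would be applied twice. Beyond that, the proof is routine.
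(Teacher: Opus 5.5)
Your proposal is correct and follows the paper's own argument: set $I_k^{(H)}=0$ in the reduced decoding condition and evaluate the Erlang complementary CDF at $\xi_k\sigma^2_{T,k}$. Your reading of the corollary's $\varphi_k$ as $\xi_k=\max(\xi_{c,k},\xi_{p,k})$, as in the proof of Proposition~1 rather than the $m_k\xi_k/L_k$ of its statement, is what keeps $m_k/L_k$ from being applied twice.

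There is one slip in your optional sanity check. After substituting $\lambda_{k,n}=(\beta_{k,n}-\delta_{k,n})/B_{k,n}$, each term of \eqref{single_outage} scales as $B_{k,n}^{p}$. So as $B_{k,n}\to0$ it is the $p=0$ terms that survive, not the $p=\ell$ terms, and the $r$-sum then collapses to $1$ because the shadowed-Rician PDF is normalized.
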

\begin{remark}
  The throughput for $U^{(T)}_k$ under a single interfering HAPS beam can be expressed as
    \begin{align}\label{throu_1}
        \mathcal{T}_{k}
= \left(1-\mathrm{P}_{\text{out},k} \right)\left({R^{(T)}_{c}}+{R^{(T)}_{p,k}}\right).   \end{align}
By substituting the expression obtained using Corollary~1, the throughput for $U^{(T)}_k$ under a single interfering HAPS beam can be obtained directly.
\end{remark}
Next, we consider the case where multiple HAPS beams contribute interference
to the terrestrial user $U^{(T)}_k$. Unlike the single-beam case, where the
HAPS interference power follows a single shadowed-Rician distribution and an
exact outage expression can be obtained, the aggregate interference in the
multiple beam scenario is given by $I^{(H)}_k$ (as explained earlier),
which represents a sum of independent but non-identically distributed
shadowed Rician random variables. Since the exact PDF of such a summation is analytically intractable, the aggregate interference is approximated by a Gamma distribution using moment matching.
Accordingly,
$I^{(H)}_k\approx \Gamma\left(m_{I,k},\frac{\Omega_{I,k}}{m_{I,k}}\right)$,
where the shape parameter $m_{I,k}$ and the average interference power
$\Omega_{I,k}$ are obtained by matching the first two moments of the actual
interference distribution as
\begin{align}
\Omega_{I,k}&=\sum_{n\in\mathcal{N}_k}C_{k,n}P_{H,n}\left(\Omega_{k,n}+2b_{k,n}\right),\\
\sigma^2_{I,k}&=\sum_{n\in\mathcal{N}_k}B^2_{k,n}\left(\mathbb{E}\left[Z^2_{k,n}\right]-\left(\mathbb{E}\left[Z_{k,n}\right]\right)^2\right),
\end{align}
with $B_{k,n}=C_{k,n}P_{H,n}$. The equivalent Gamma shape parameter is therefore given by $m_{I,k}=\frac{\Omega^2_{I,k}}{\sigma^2_{I,k}}$, and the corresponding PDF is expressed as
\begin{align}\label{moment_PDF_gamma}
f_{I^{(H)}_k}(y)=\frac{\left(\frac{m_{I,k}}{\Omega_{I,k}}\right)^{m_{I,k}}}{\Gamma(m_{I,k})}y^{m_{I,k}-1}\exp\left(-\frac{m_{I,k}}{\Omega_{I,k}}y\right).
\end{align}

\begin{figure*}[t]
    \centering
   \begin{minipage}[b]{0.32\textwidth}
        \centering
       \includegraphics[width=\textwidth]{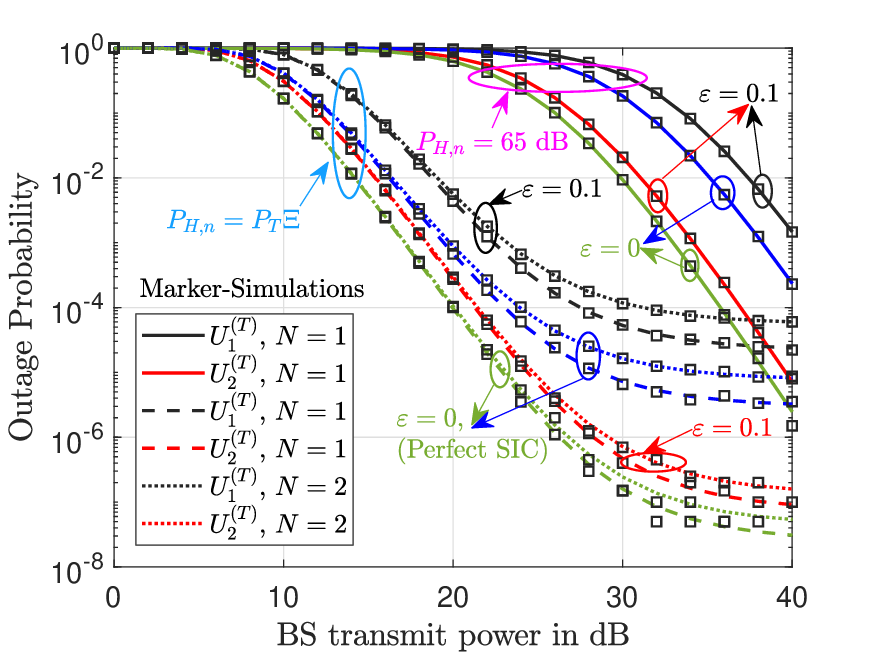}
       \caption{$\mathrm{P}_{\text{out},k}$ vs BS transmit power.}
        \label{fig1}
    \end{minipage}
     \begin{minipage}[b]{0.32\textwidth}
        \centering
       \includegraphics[width=\textwidth]{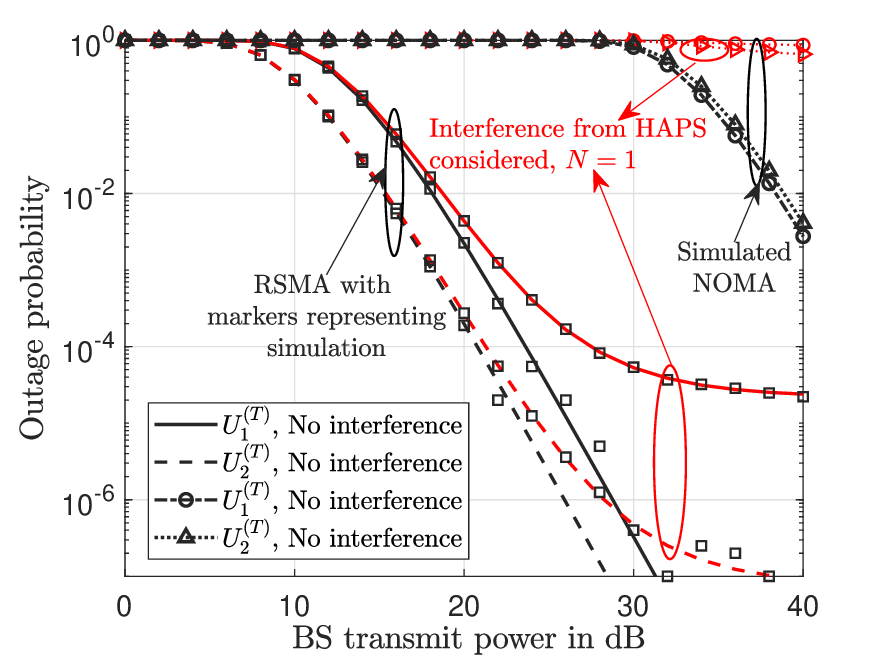}
        \caption{RSMA comparison with NOMA.}
        \label{fig2}
    \end{minipage}
    \begin{minipage}[b]{0.32\textwidth}
        \centering
       \includegraphics[width=\textwidth]{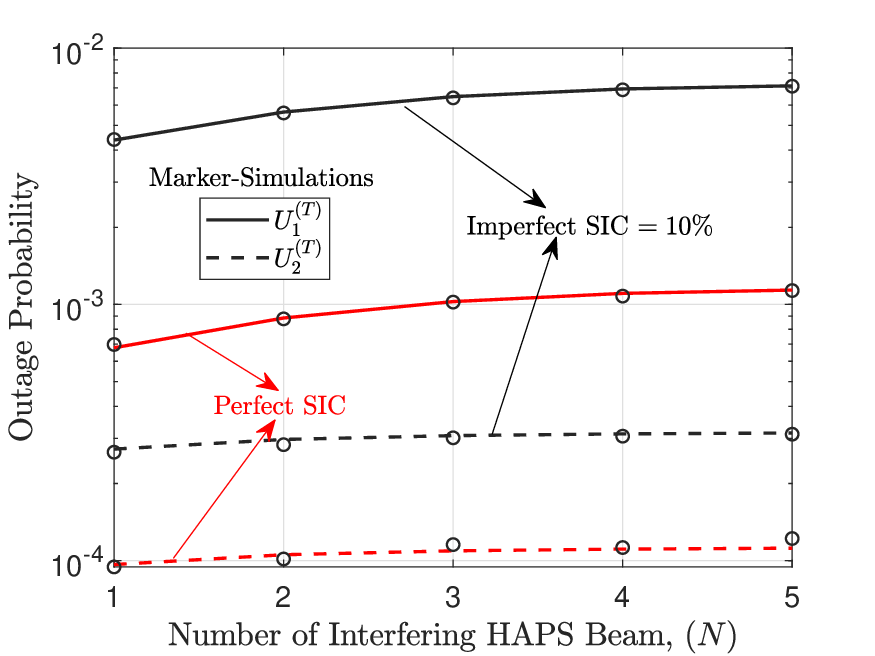}
        \caption{Effect of Interfering beam on $\mathrm{P}_{\text{out},k}$.}
        \label{fig3}
    \end{minipage} 
\end{figure*}

The outage probability considering multiple interfering HAPS beams is then
given in the following proposition.
\begin{proposition}
The outage probability of $U^{(T)}_k$ under multiple interfering HAPS beam is given by
\begin{align}\label{Multi_out}
      \mathrm{P}_{\text{out},k} &= 1-\sum\limits_{\ell=0}^{m_k-1}\sum\limits_{p=0}^{\ell}\binom{\ell}{p}\frac{\varphi_{k}^{\ell}}{\ell!}\left(\sigma^2_{T,k}\right)^{\ell-p}\exp\left(-\varphi_{k}\sigma^2_{T,k}\right)\nonumber\\
        &\times\left(\frac{m_{I,k}}{\Omega_{I,k}}\right)^{m_{I,k}} \frac{\Gamma\left(m_{I,k}+p\right)}{\Gamma(m_{I,k}) \left(\varphi_k+\frac{m_{I,k}}{\Omega_{I,k}}\right)^{m_{I,k}+p}},
\end{align}
where $m_{I,k}=\frac{\Omega^2_{I,k}}{\sigma^2_{I,k}}$, with rest of the parameters being already defined in Proposition~1.
\end{proposition}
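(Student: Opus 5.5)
The plan is to reuse the argument of Proposition~1 essentially verbatim, changing only the interference law. The single-beam shadowed-Rician mixture is replaced by the moment-matched Gamma surrogate \eqref{moment_PDF_gamma} for $I^{(H)}_k$. The statement should be read as exact for this surrogate, so the approximation enters only through $m_{I,k}$ and $\Omega_{I,k}$.

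\textbf{Reduction to a single condition.} Start from \eqref{eqA:def}. The common- and private-stream constraints \eqref{comm_sinr} and \eqref{SINR_private} are each linear in $X_k$. Under the feasibility conditions of Proposition~1, rearranging them gives $X_k\ge \xi_{c,k}(I^{(H)}_k+\sigma^2_{T,k})$ and $X_k\ge \xi_{p,k}(I^{(H)}_k+\sigma^2_{T,k})$. Their intersection is therefore $X_k\ge \xi_k\,(I^{(H)}_k+\sigma^2_{T,k})$ with $\xi_k=\max(\xi_{c,k},\xi_{p,k})$; nothing here depends on the number of beams. If either feasibility condition fails, the outage probability is $1$ exactly as before.

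\textbf{Conditioning on the interference.} Since $X_k$ is independent of all HAPS fading coefficients, and hence of $I^{(H)}_k$, I condition on $I^{(H)}_k=y$. I then use the Erlang complementary CDF of $X_k$ for integer $m_k$, which with $\varphi_k=m_k\xi_k/L_k$ gives
\begin{align}
\bar F_{X_k}\big(\xi_k(y+\sigma^2_{T,k})\big)=e^{-\varphi_k(y+\sigma^2_{T,k})}\sum_{\ell=0}^{m_k-1}\frac{\varphi_k^\ell (y+\sigma^2_{T,k})^\ell}{\ell!}.
\end{align}
Next I expand $(y+\sigma^2_{T,k})^\ell=\sum_{p=0}^{\ell}\binom{\ell}{p}y^p(\sigma^2_{T,k})^{\ell-p}$ and pull the factor $e^{-\varphi_k\sigma^2_{T,k}}$ outside. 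This leaves, for each $(\ell,p)$, the integral
\begin{align}
\frac{(m_{I,k}/\Omega_{I,k})^{m_{I,k}}}{\Gamma(m_{I,k})}\int_0^\infty y^{m_{I,k}+p-1}e^{-(\varphi_k+m_{I,k}/\Omega_{I,k})y}\,dy.
\end{align}
This is a standard Gamma integral equal to $\Gamma(m_{I,k}+p)/(\varphi_k+m_{I,k}/\Omega_{I,k})^{m_{I,k}+p}$. It remains valid for non-integer $m_{I,k}$ because $m_{I,k}>0$ and $\varphi_k+m_{I,k}/\Omega_{I,k}>0$, so convergence is immediate. Collecting the finite sums gives \eqref{Multi_out}.

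\textbf{Main obstacle.} The only step needing real care is making the parameters explicit and legitimate. The mean $\Omega_{I,k}$ follows from \eqref{eq:Ym_mean} by linearity. The variance uses independence across beams and needs $\mathbb{E}[Z^2_{k,n}]$, which I would compute term-by-term from the finite mixture PDF of $Z_{k,n}$. That computation gives $\alpha_{k,n}\sum_r\binom{m_{k,n}-1}{r}\delta^r_{k,n}(r+2)!/\big(r!(\beta_{k,n}-\delta_{k,n})^{r+3}\big)$, valid because $\beta_{k,n}>\delta_{k,n}$. The resulting $m_{I,k}=\Omega^2_{I,k}/\sigma^2_{I,k}$ is then positive, which suffices for the integral above. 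As a consistency check, I would also verify that for $|\mathcal N_k|=1$ the expression coincides with the single-beam setting up to the Gamma approximation.
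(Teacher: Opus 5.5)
Your proposal is correct and follows the paper's proof essentially step for step. Like the paper, you reuse the reduction to $X_k\ge\xi_k(I^{(H)}_k+\sigma^2_{T,k})$ from Proposition~1, substitute the moment-matched Gamma PDF \eqref{moment_PDF_gamma} and the Erlang complementary CDF, expand $(y+\sigma^2_{T,k})^\ell$ binomially, and evaluate the Gamma integral for real $m_{I,k}>0$. Your separation of the threshold $\xi_k=\max(\xi_{c,k},\xi_{p,k})$ from the scaled parameter $\varphi_k=m_k\xi_k/L_k$ is cleaner than the paper's proof, which writes the decoding condition with $\varphi_k$ in place of $\xi_k$, and your computation of $\mathbb{E}[Z^2_{k,n}]$ supplies a detail the paper leaves implicit.
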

\begin{proof}
    The reduction to $X_k\ge\varphi_k(I_k^{(H)}+\sigma_{T,k}^2)$ and \eqref{main_int} in the previous proof hold unchanged; only the interference PDF differs.
For multiple beam interference, $I_k^{(H)}$ is approximated by the moment-matched
Gamma PDF as given in \eqref{moment_PDF_gamma}.
Substituting this and $\bar F_{X_k}(\varphi_k(y+\sigma_{T,k}^2))$ into \eqref{main_int}, expanding $(y+\sigma_{T,k}^2)^\ell$, and using
$\int_0^\infty y^{m_{I,k}-1+p}\exp({-\hat{\eta} y})\; dy=\Gamma(m_{I,k}+p)/\hat{\eta}^{m_{I,k}+p}$
(valid for real $m_{I,k}>0$) with $\hat{\eta}=\varphi_k+m_{I,k}/\Omega_{I,k}$ yields
\eqref{Multi_out}.
\end{proof}

\begin{remark}
By substituting \eqref{Multi_out} in \eqref{throu_1}, $U^{(T)}_k$ throughput  considering multiple interfering HAPS beam can be obtained.
\end{remark}
\section{Numerical Results and Discussion}

This section validates the analytical outage expressions derived in Section~\ref{sec:analysis} against Monte Carlo simulations, and evaluates the outage performance of the TN under HAPS interference for varying numbers of interfering beams, as well as in the interference-free case. Although the derived expressions hold for $\mathcal{K}$ users, results are shown for two downlink users, $U^{(T)}_1$ and $U^{(T)}_2$, for clarity. the distance of $U^{(T)}_1$ and $U^{(T)}_2$ from BS is set at $250$~m and $265$~m, respectively and $\sigma^2_{T,k}=-90$~dB. Unless stated otherwise, all simulation parameters are listed in Table~\ref{sim_table}. 
\begin{table}[t]	\renewcommand{\arraystretch}{1.0}
		\centering
		\caption{ Simulation Parameters.}
		\label{sim_table}
			\resizebox{\columnwidth}{!}{\begin{tabular}{|l|l|l|l|l|l|}
			\hline
			Parameter         & Value         & Parameter & Value  & Parameter & Value  \\ \hline
			$\mathcal{H} $   &     $20 $~km   	&  $\psi_{1,1}  $   &     $40^{\circ}$ &  $\psi_{2,1} $   &     $30^{\circ}$ \\ \hline
			
			$G_H^{\max}$    &    $ 30$~dBi &	  $G_{U_k}^{(t)}$ &     $0$~dBi &   $R_{c}^{(T)}$ &     $0.85$  	 \\ \hline	

           $R_{p,1}^{(T)} $   &     $0.4 $ &	 $R_{p,2}^{(T)} $&     $0.85$ &   $m_1=m_2$ &     $4$  	 \\ \hline

           $m_{1,1}$   &     $7 $ &	  $\Omega_{1,1}$ &     $1.25$ &  $b_{1,1}$ &     $0.138$ 	 \\ \hline	
           
             $m_{2,1}$   &     $4 $ &	  $\Omega_{2,n}$ &     $1.0$ &  $b_{2,n}$ &     $0.1$ 	 \\ \hline	

$\zeta_{T,c}$   &     $0.56 $ &	  $\zeta_{T,1}$ &     $0.14$ &  $\zeta_{T,2}$ &     $0.3$ 	 \\ \hline

$G_{BS}$   &     $10$~dBi &	  $f_c$ &     $10$~GHz &  $A_m$ &     $25$~dB 	 \\ \hline

$L_r$   &     $0.05$~dB &	  $L_a$ &     $0.1$~dB &  $\varepsilon$ &     $0.1$ 	 \\ \hline
		\end{tabular}}
        \vspace{-1em}
	\end{table}
    
Fig.~\ref{fig1} illustrates the impact of HAPS interference on the outage performance of the TN users under different power-scaling conditions considering perfect and imperfect SIC. Three interference scenarios are considered: (i) a single HAPS beam with fixed transmit power, $P_{H,n}=65$~dB; (ii) a single HAPS beam whose transmit power scales with the TN transmit power according to $P_{H,n}=P^{(T)}\Xi$, where $\Xi=20$~dB; and (iii) two HAPS beams following the same scaling relationship with the second beam having values as $\psi_{1,2}=30^{\circ}$, $\psi_{2,2}=20^{\circ}$, $m_{1,2}=5$, $m_{2,2}=4$, $b_{1,2}=0.118$, $b_{2,2}=0.08$, $\Omega_{1,2}=1.05$, and $\Omega_{2,2}=0.8$. For Case I, the fixed interference dominates at low $P^{(T)}$, causing deep outage up to $P^{(T)}\leq 20$~dB. Beyond this point, rising $P^{(T)}$ overcomes the fixed interference and outage decreases monotonically.

In Cases~II and III, interference grows jointly with $P^{(T)}$, therefore, the outage saturates to a floor regardless of the transmit power, with the floor rising further in Case~III due to the second interfering beam. This contrast shows that HAPS interference impact depends not on its absolute power but on whether it scales with the desired TN signal. Fixed interference is eventually overcome, while proportional interference imposes a persistent limitation. Lastly, the analytical and simulated curves agree closely throughout, validating the expressions derived in Propositions~1 and~2.

Fig.~\ref{fig2} compares RSMA and NOMA TN outage, with and without HAPS interference considering imperfect SIC. NOMA power coefficients are $0.42$ and $0.58$, respectively, with target rates of $0.83$ and $1.25$ and decoding order $U^{(T)}_2>U^{(T)}_1$. For a fair comparison with RSMA, the NOMA power allocation is obtained by distributing the RSMA common-stream power allocation equally between the two users and adding the resulting portions to their respective private-stream powers.
At low $P^{(T)}$, RSMA with and without HAPS interference nearly coincide, as the system remains noise-limited. As $P^{(T)}$ increases, the interference-free curve keeps improving, while the interference case saturates once
$P_{H,n}$ scales with $P^{(T)}$ by factor $\Xi$ and the system enters the interference-limited regime, the same mechanism as Fig.~\ref{fig1}. RSMA maintains lower outage than NOMA throughout, both with and without interference, indicating that its flexible stream structure manages HAPS coexistence more robustly than NOMA's fixed decoding order. Lastly, the interference-free RSMA simulation matches the analytical curve, validating Corollary~1.

Fig.~\ref{fig3} shows the outage probability versus the number of interfering HAPS beams $N$, for perfect and imperfect SIC, with $P^{(T)}=20$~dB and $P_{H,n}=P^{(T)}\Xi$ with $\Xi=20$~dB. The per-beam elevation angles are $\psi_{1,n}\in\{40^\circ,30^\circ,25^\circ,20^\circ,15^\circ\}$ and $\psi_{2,n}\in\{30^\circ,20^\circ,15^\circ,12^\circ,10^\circ\}$ with corresponding shadowed-Rician parameters $m_{1,n}\in\{7,5,5,4,4\}$, $b_{1,n}\in\{0.138,0.118,0.11,0.105,0.1\}$, $\Omega_{1,n}\in\{1.25,1.05,0.95,0.85,0.75\}$ for $U^{(T)}_1$, and $m_{2,n}\in\{4,4,3,3,3\}$, $b_{2,n}\in\{0.100,0.08,0.075,0.07,0.065\}$, $\Omega_{2,n}\in\{1.0,0.8,0.7,0.6,0.55\}$ for $U^{(T)}_2$. Imperfect SIC yields outage levels of approximately $10^{-2.5}$ and $10^{-3}$ for $U^{(T)}_1$ and $U^{(T)}_2$, respectively, while perfect SIC improves these to about $10^{-3.5}$ and $10^{-4}$ due to the removal of residual interference. The impact of increasing $N$ is more pronounced under imperfect SIC, where outage increases mainly from $N=1$ to $N=3$ and becomes less sensitive to further increases in $N$. In contrast, perfect SIC shows a comparatively weaker sensitivity to increasing $N$ over the considered range. This indicates that residual interference leaves the users more susceptible to additional HAPS interference, whereas perfect SIC provides a larger SINR margin and therefore reduces sensitivity to the number of interfering beams.

\begin{figure}[t]
\centering
\includegraphics[width=0.48\columnwidth]{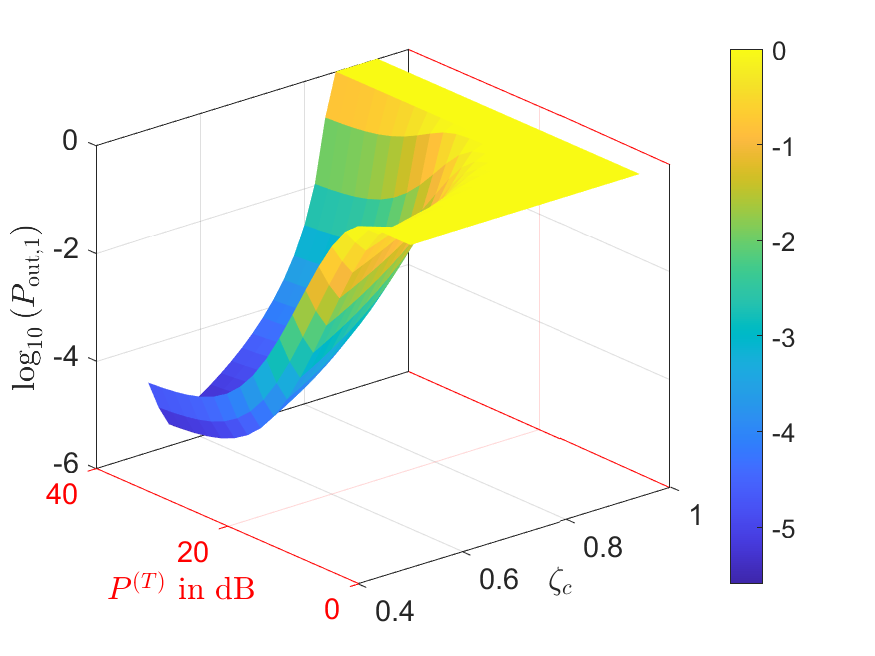}
\hfill
\includegraphics[width=0.48\columnwidth]{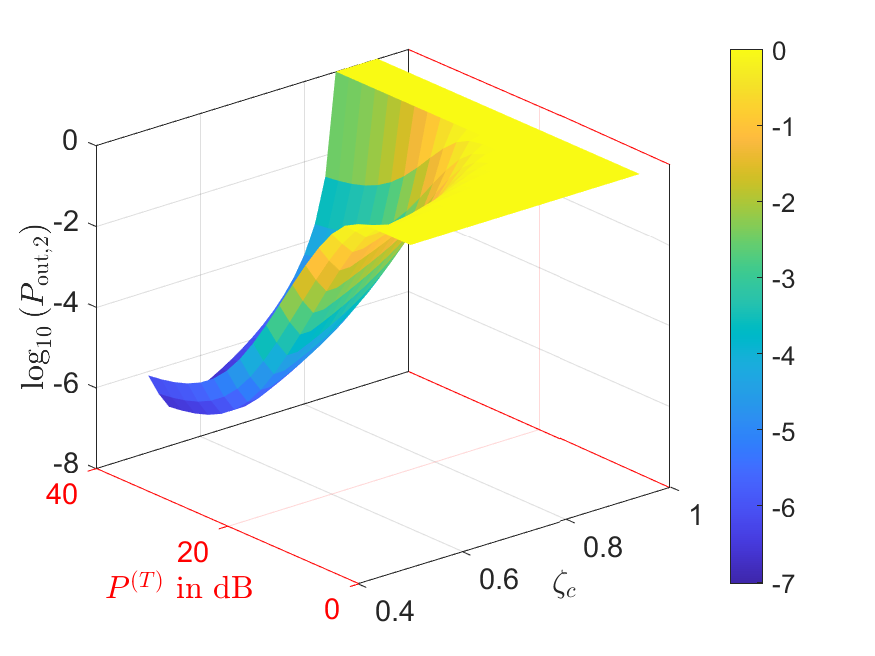}
\caption{Effect of $P^{(T)}$ and $\zeta_c^{(T)}$ on outage probability.}
\label{fig23}
\vspace{-1.2em}
\end{figure}

Fig.~\ref{fig23} show the outage probability of $U^{(T)}_1$
and $U^{(T)}_2$, respectively, as a joint function of $P^{(T)}$ and the
common-stream power fraction $\zeta_c^{(T)}$, for $N=3$ interfering HAPS
beams (value corresponds to  the first three beams of the sets used in Fig.~\ref{fig3}), $\epsilon=0.1$, and $P_{H,n}=P^{(T)}\Xi$ with $\Xi=20$~dB. Both
surfaces exhibit a valley bounded by two saturated regions: at low
$P^{(T)}$, outage approaches unity for all $\zeta_c^{(T)}$ since the
desired signal is too weak regardless of the power split, whereas at
$\zeta_c^{(T)}\gtrsim0.84$ outage saturates due to a hard infeasibility
condition, insufficient private power to meet the rate threshold,
irrespective of $P^{(T)}$. Notably, the minimizing $\zeta_c^{(T)}$ remains
close to $0.54$ across the entire power range, indicating that the optimal
split is largely power-invariant for this scenario and closely matches the
operating point $\zeta_c^{(T)}=0.56$ used in Figs.~\ref{fig1}-\ref{fig3},
confirming that those results were generated near the outage-optimal
allocation.

\section{Conclusion}
This paper studied the coexistence of RSMA-based TN with HAPS-based NTN interference, jointly modeling Nakagami-$m$ fading on the desired TN link and shadowed-Rician fading on the aggregate multibeam HAPS interference. Closed-form outage expressions were derived and validated against Monte Carlo simulations under perfect and imperfect SIC. Our results show that the impact of HAPS interference on TN outage performance is governed not by its absolute power but by how it scales relative to the TN transmit power: fixed-power interference is eventually overcome, whereas interference that scales jointly with TN transmit power produces an irreducible outage floor. We further showed that RSMA consistently outperforms NOMA under HAPS interference, while residual SIC imperfections amplify sensitivity to the number of interfering beams, underscoring the importance of effective SIC in interference-limited TN-NTN coexistence. These findings offer practical design guidance for HAPS power control and RSMA resource allocation in future 6G TN-NTN integrated networks.

\bibliographystyle{IEEEtran_renamed}
	\bibliography{refer}
\end{document}